\newcommand\nc\newcommand
\nc{\bb}[1]{\mathbb{#1}}
\renewcommand{\cal}[1]{\mathcal{#1}}
\renewcommand{\bf}[1]{\mathbf{#1}}
\DeclarePairedDelimiter{\ceil}{\lceil}{\rceil}
\DeclarePairedDelimiter{\floor}{\lfloor}{\rfloor}
\DeclarePairedDelimiter{\set}{\lbrace}{\rbrace}
\DeclarePairedDelimiter{\br}{\lparen}{\rparen}
\DeclarePairedDelimiter{\brac}{\lbrack}{\rbrack}
\DeclarePairedDelimiter{\abs}{\lvert}{\rvert}
\nc\bfa{{\bf{a}}}\nc\bfA{{\boldsymbol A}}\nc\cA{{\cal A}} \nc\fA[1]{A\br*{#1}} \nc\fa[1]{a\br*{#1}}  \nc\rmA{\mathrm{A}} \nc\rma{\mathrm{a}}
\nc\bfb{{\bf{b}}}\nc\bfB{{\boldsymbol B}}\nc\cB{{\cal B}} \nc\fB[1]{B\br*{#1}} \nc\fb[1]{b\br*{#1}}  \nc\rmB{\mathrm{B}} \nc\rmb{\mathrm{b}}
\nc\bfc{{\bf{c}}}\nc\bfC{{\boldsymbol C}}\nc\cC{{\cal C}} \nc\fC[1]{C\br*{#1}} \nc\fc[1]{c\br*{#1}}  \nc\rmC{\mathrm{C}} \nc\rmc{\mathrm{c}}
\nc\bfd{{\bf{d}}}\nc\bfD{{\boldsymbol D}}\nc\cD{{\cal D}} \nc\fD[1]{D\br*{#1}} \nc\fd[1]{d\br*{#1}}  \nc\rmD{\mathrm{D}} \nc\rmd{\mathrm{d}}
\nc\bfe{{\bf{e}}}\nc\bfE{{\boldsymbol E}}\nc\cE{{\cal E}} \nc\fE[1]{E\br*{#1}} \nc\fe[1]{e\br*{#1}}  \nc\rmE{\mathrm{E}} \nc\rme{\mathrm{e}}
\nc\bff{{\bf{f}}}\nc\bfF{{\boldsymbol F}}\nc\cF{{\cal F}} \nc\fF[1]{F\br*{#1}} \nc\ff[1]{f\br*{#1}}  \nc\rmF{\mathrm{F}} \nc\rmf{\mathrm{f}}
\nc\bfg{{\bf{g}}}\nc\bfG{{\boldsymbol G}}\nc\cG{{\cal G}} \nc\fG[1]{G\br*{#1}} \nc\fg[1]{g\br*{#1}}  \nc\rmG{\mathrm{G}} \nc\rmg{\mathrm{g}}
\nc\bfh{{\bf{h}}}\nc\bfH{{\boldsymbol H}}\nc\cH{{\cal H}} \nc\fH[1]{H\br*{#1}} \nc\fh[1]{h\br*{#1}}  \nc\rmH{\mathrm{H}} \nc\rmh{\mathrm{h}}
\nc\bfi{{\bf{i}}}\nc\bfI{{\boldsymbol I}}\nc\cI{{\cal I}} \nc\fI[1]{I\br*{#1}} \nc\rmI{\mathrm{I}} \nc\rmi{\mathrm{i}}
\nc\bfj{{\bf{j}}}\nc\bfJ{{\boldsymbol J}}\nc\cJ{{\cal J}} \nc\fJ[1]{J\br*{#1}} \nc\fj[1]{j\br*{#1}} \nc\rmJ{\mathrm{J}} \nc\rmj{\mathrm{j}}
\nc\bfk{{\bf{k}}}\nc\bfK{{\boldsymbol K}}\nc\cK{{\cal K}} \nc\fK[1]{K\br*{#1}} \nc\fk[1]{k\br*{#1}} \nc\rmK{\mathrm{K}} \nc\rmk{\mathrm{k}}
\nc\bfl{{\bf{l}}}\nc\bfL{{\boldsymbol L}}\nc\cL{{\cal L}} \nc\fL[1]{L\br*{#1}} \nc\fl[1]{l\br*{#1}} \nc\rmL{\mathrm{L}} \nc\rml{\mathrm{l}}
\nc\bfm{{\bf{m}}}\nc\bfM{{\boldsymbol M}}\nc\cM{{\cal M}} \nc\fM[1]{M\br*{#1}} \nc\fm[1]{m\br*{#1}} \nc\rmM{\mathrm{M}} \nc\rmm{\mathrm{m}}
\nc\bfn{{\bf{n}}}\nc\bfN{{\boldsymbol N}}\nc\cN{{\cal N}} \nc\fN[1]{N\br*{#1}} \nc\fn[1]{n\br*{#1}} \nc\rmN{\mathrm{N}} \nc\rmn{\mathrm{n}}
\nc\bfo{{\bf{o}}}\nc\bfO{{\boldsymbol O}}\nc\cO{{\cal O}} \nc\fO[1]{O\br*{#1}} \nc\fo[1]{o\br*{#1}} \nc\rmO{\mathrm{O}} \nc\rmo{\mathrm{o}}
\nc\bfp{{\bf{p}}}\nc\bfP{{\boldsymbol P}}\nc\cP{{\cal P}} \nc\fP[1]{P\br*{#1}} \nc\fp[1]{p\br*{#1}} \nc\rmP{\mathrm{P}} \nc\rmp{\mathrm{p}}
\nc\bfq{{\bf{q}}}\nc\bfQ{{\boldsymbol Q}}\nc\cQ{{\cal Q}} \nc\fQ[1]{Q\br*{#1}} \nc\fq[1]{q\br*{#1}} \nc\rmQ{\mathrm{Q}} \nc\rmq{\mathrm{q}}
\nc\bfr{{\bf{r}}}\nc\bfR{{\boldsymbol R}}\nc\cR{{\cal R}} \nc\fR[1]{R\br*{#1}} \nc\fr[1]{r\br*{#1}} \nc\rmR{\mathrm{R}} \nc\rmr{\mathrm{r}}
\nc\bfs{{\bf{s}}}\nc\bfS{{\boldsymbol S}}\nc\cS{{\cal S}} \nc\fS[1]{S\br*{#1}} \nc\fs[1]{s\br*{#1}} \nc\rmS{\mathrm{S}} \nc\rms{\mathrm{s}}
\nc\bft{{\bf{t}}}\nc\bfT{{\boldsymbol T}}\nc\cT{{\cal T}} \nc\fT[1]{T\br*{#1}} \nc\ft[1]{t\br*{#1}} \nc\rmT{\mathrm{T}} \nc\rmt{\mathrm{t}}
\nc\bfu{{\bf{u}}}\nc\bfU{{\boldsymbol U}}\nc\cU{{\cal U}} \nc\fU[1]{U\br*{#1}} \nc\fu[1]{u\br*{#1}} \nc\rmU{\mathrm{U}} \nc\rmu{\mathrm{u}}
\nc\bfv{{\bf{v}}}\nc\bfV{{\boldsymbol V}}\nc\cV{{\cal V}} \nc\fV[1]{V\br*{#1}} \nc\fv[1]{v\br*{#1}} \nc\rmV{\mathrm{V}} \nc\rmv{\mathrm{v}}
\nc\bfw{{\bf{w}}}\nc\bfW{{\boldsymbol W}}\nc\cW{{\cal W}} \nc\fW[1]{W\br*{#1}} \nc\fw[1]{w\br*{#1}} \nc\rmW{\mathrm{W}} \nc\rmw{\mathrm{w}}
\nc\bfx{{\bf{x}}}\nc\bfX{{\boldsymbol X}}\nc\cX{{\cal X}} \nc\fX[1]{X\br*{#1}} \nc\fx[1]{x\br*{#1}} \nc\rmX{\mathrm{X}} \nc\rmx{\mathrm{x}}
\nc\bfy{{\bf{y}}}\nc\bfY{{\boldsymbol Y}}\nc\cY{{\cal Y}} \nc\fY[1]{Y\br*{#1}} \nc\fy[1]{y\br*{#1}} \nc\rmY{\mathrm{Y}} \nc\rmy{\mathrm{y}}
\nc\bfz{{\bf{z}}}\nc\bfZ{{\boldsymbol Z}}\nc\cZ{{\cal Z}} \nc\fZ[1]{Z\br*{#1}} \nc\fz[1]{z\br*{#1}} \nc\rmZ{\mathrm{Z}} \nc\rmz{\mathrm{z}}
\DeclareMathOperator{\Dim}{dim}
\DeclareMathOperator{\Supp}{supp}
\DeclareMathOperator{\Rank}{rank}
\nc\defeq{\coloneqq}
\newcommand{\normO}[1]{||#1||_{\ell_0}}
\newcommand{\prob}[1]{\mathbb{P}\br*{#1}}
\newcommand{\Exp}[1]{\exp\br*{#1}}
\newcommand{\supp}[1]{\Supp\br*{#1}}
\newcommand{\Log}[1]{\log\br*{#1}}
\newcommand{\rank}[1]{\Rank\br*{#1}}
\newcommand{\Min}[1]{\min\br*{#1}}
\DeclarePairedDelimiterX\Set[1]\{\}{#1}
\newcommand\real{{\mathbb R}}
\newcommand\F{{\mathbb F}}
\newtheorem{theorem}{Theorem}
\crefname{definition}{defn.}{defns}
\newtheorem{corollary}[theorem]{Corollary}
\newcommand\be{\text{Bern}}
\crefname{Appendix}{Appendix}{Appendices}
\begin{document}
\sloppy

\title{Estimation of Sparsity via Simple Measurements}
\author{\IEEEauthorblockN{Abhishek Agarwal}\hspace*{1in}
\IEEEauthorblockN{Larkin Flodin}\hspace*{1in}
\IEEEauthorblockN{Arya Mazumdar}}

\allowdisplaybreaks
\sloppy
\maketitle

{\renewcommand{\thefootnote}{}\footnotetext{

\vspace{-.2in}
 
\noindent\rule{1.5in}{.4pt}

College of Information and Computer Sciences, University of Massachusetts Amherst. 
\texttt{\{abhiag,lflodin,arya\}@cs.umass.edu}. This work was supported by NSF CCF Awards 1318093, 1642658,  1642550.
}
\renewcommand{\thefootnote}{\arabic{footnote}}
\setcounter{footnote}{0}

\begin{abstract}
We consider several related problems of estimating the `sparsity' or number of nonzero elements $d$ in a length $n$ vector $\bfx$ by observing only $\bfb = M \odot \bfx$, where $M$ is a predesigned test matrix independent of $\bfx$, and the operation $\odot$ varies between problems. We aim to provide a $\Delta$-approximation of sparsity for some constant $\Delta$ with a minimal number of measurements (rows of $M$). This framework generalizes multiple problems, such as estimation of sparsity in group testing and compressed sensing. We use techniques from coding theory as well as probabilistic methods to show that $O(D \log D \log n)$ rows are sufficient when the operation $\odot$ is logical OR (i.e., group testing), and nearly this many are necessary, where $D$ is a known upper bound on $d$. When instead the operation $\odot$ is multiplication over $\real$ or a finite field $\F_q$, we show that respectively $\Theta(D)$ and $\Theta(D \log_q \frac{n}{D})$ measurements are necessary and sufficient.
\end{abstract}

\section{Introduction}
\label{sec:intro}

Suppose that we want to identify an $n$ dimensional vector $\bfx \in \F^n$, however, we can only observe the output $\bfb$ where
\begin{equation}\label{measurement}
M \odot \bfx = \bfb
\end{equation}
for a designed matrix $M \in {\F}^{m\times n}$. Let $M_{ij}$ denote the $(i,j)$th entry of $M$ and let $\bfx_i$ denote the $i$th component of $\bfx$. We will frequently refer to a single row of $M$ as a ``test'' or ``measurement.'' If we define the operation $\odot$ in \cref{measurement} as standard matrix multiplication over the field, and $\F = \real$, the problem of identifying $\bfx$ is known as compressed sensing (and solvable with $m \ll n$ tests when $\bfx$ is `sparse'). If instead we define the operation $\odot$ as the logical OR, so that $\bfb_i \defeq \bigvee_{j : M_{ij}=1} \bfx_j$, and $\F = \F_2$, the identification problem is known as group testing. We can easily identify $\bfx$ when $m \geq n$, for example by taking $M$ to be the identity matrix. Let $d$ be the {\em sparsity} or number of nonzero entries of the vector $\bfx$. For the case when it is known a priori that $\bfx$ is sparse (that is to say $d \ll n$), it has been shown that $m = O(d^2 \log{n})$ measurements are sufficient for identification in the group testing setting, and even fewer measurements, $m =2d$, are necessary and sufficient in the compressed sensing setting.
The group testing result is only a $\log d$ factor away from the known lower bound \cite{du1999combinatorial}.

Without any information about $\bfx$, it would be desirable to first estimate $d$, and then use this estimate to choose a suitable strategy to identify the $d$-sparse vector $\bfx$. A considerable body of research (see surveys in \cite{gilbert2010sparse,du1999combinatorial}) has focused on the identification problem when an upper bound on $d$ is known in advance. In comparison less work has been done on the problem of simply estimating $d$, without trying to determine which specific entries are nonzero \cite{damaschke2010bounds,damaschke2010competitive,cheng2011efficient,cheng2014efficient,ron2014power,acharya2015adaptive}. For group testing in the adaptive setting (when each subsequent test can depend on the results of previous tests), a recent result of Falahatgar et al. \cite{falahatgar2016estimating} allows approximation of $d$ in as few as $O(\log \log d)$ tests, though with a small probability of error.
In this paper, we solely concentrate on the non-adaptive version, in which the entire test matrix is specified before the results of any tests are seen.

In particular, we provide tight upper and lower bounds on the number of measurements required by any deterministic algorithm for estimating $d$ within a constant multiplicative factor of $\Delta$ ($\Delta$-approximation) in three different settings of this problem (i.e., definitions of the operation $\odot$  and field $\F$ in \cref{measurement}). Note that $\Delta$-approximation implies an estimate $\hat{d}$ such that
\begin{equation}
\label{estimation_crit}
\frac{1}{\Delta} \leq \frac{\hat{d}}{d} \leq \Delta.
\end{equation}

Earlier results by Damaschke and Muhammad \cite{damaschke2010bounds,damaschke2010competitive} in the non-adaptive group testing setting show that when no upper bound on the number of defectives is known, $O(\log n)$ queries are needed to approximate $d$ even when the scheme is allowed to fail for a small number of inputs, and that any deterministic strategy capable of exactly determining $d$ requires enough information to reconstruct the vector $\bfx$ exactly. In the group testing model, we restrict our attention to the unstudied problem of bounds for non-adaptive approximation schemes which work for all inputs, i.e., those that always produce an estimate of $d$ within the specified range of allowable estimates.

The majority of work in the area of compressed  sensing  is concerned with the more difficult problem of how to proceed when $\bfx$ is not exactly sparse, but instead is approximately sparse, meaning it is close in $\ell_2$ norm to a sparse vector. To our knowledge all such works are concerned with recovery of the vector $\bfx$, rather than estimation of its sparsity. Another set of works from both the compressed sensing and signal processing literatures \cite{lopes2015compressed, reeves2010fundamental, wang2012sparsity, fletcher2009necessary} focus on the related problem of ``sparsity pattern recovery,'' which involves identifying the positions of the nonzero (or largest) entries of the vector $\bfx$, but not their values.

We will require our tests to be non-adaptive, but in contrast to existing works, we focus only on estimating the sparsity of $\bfx$, under the assumptions of absolute sparsity (as opposed to approximate sparsity) and no additional noise, over both finite fields and $\real$.

First, in \cref{sec:approximation_equivalence} we identify a necessary and sufficient condition on the matrix $M$ for $\Delta$-approximation to be possible. This condition applies to all models which we consider. In \cref{sec:group_testing}, we look at the group testing model specifically; given an upper bound $D$ on the number of defectives $d$, we demonstrate a lower bound $\Omega(\min(n,D \log n))$ on the number of measurements needed to $\Delta$-approximate $d$ without error. The lower bound uses a simple and elegant combinatorial approach to bound the size of a cover of the space of possible input vectors $\bfx$, by showing all such vectors with the same output $\bfb = M \odot \bfx$ must be elements of the same poset of the Boolean lattice. We also show that this lower bound is nearly tight by demonstrating the existence of a matrix $M$ with $m=O(\min(n, D\log n\frac{\log D}{\log \Delta}))$ rows along with a deterministic algorithm that $\Delta$-approximates $d$.

In \cref{sec:group_testing_with_output_noise} we generalize the results of the previous section by assuming adversarial output noise is added to the test results. We recover similar lower and upper bounds on the number of necessary tests in this new setting, with additional additive terms that depend on the amount of noise.
In \cref{sec:sensing_linear}, we take the operation $\odot$ to be multiplication over either $\F_q$ or $\real$, and exhibit close connections between the number of tests needed to approximate $d$ and existing quantities in coding theory; this allows us to prove lower and upper bounds on the number of necessary tests that are tight up to constant factors. Our main results are summarized in \cref{results_table}, though some are more fine-grained than the table implies.
\begin{center}
\begin{threeparttable}[!t]
\footnotesize
\caption{Number of Measurements Needed}
\label{results_table}
\begin{centering}
\begin{tabular}{|m{20mm}|c|c|c|}
\hline
 \centering \multirow{2}{*}{Model} & \multirow{2}{*}{$D=n$} &  \multicolumn{2}{c|}{$D=\mathrm{o}(n)$}  \\
 \hhline{|~|~|--}
 &       & Lower Bound                & Upper Bound \\
 \hhline{-|-|-|-}
&&&\\[-1em]
\centering Group Testing & $\Theta(n)$ & $\Omega(\frac{D}{\Delta^2} \log \frac{n}D)$ & $O(\frac{\log D}{\log \Delta} D \log n)$ \\
 \hline
\centering Compressed Sensing over $\F_q, q <  n$ & $\Theta(n)$ & $\frac{D}2 \log_q \frac{n}{D}^*$ & $2D \log_q \frac{n}{D}^*$ \\
\hline
\centering  Compressed Sensing over $\real$ or $\F_q, q \geq  n$ & $\Theta(n)$ & $D-1^{**}$ & $2D$ \\
\hline
\end{tabular}
\begin{tablenotes}
\footnotesize
\item $^*$Omitting lower order terms.
\item $^{**}$Assuming $D \geq 2 \floor{\Delta^2} - 4.$
\end{tablenotes}
\end{centering}
\end{threeparttable}
\end{center}

\section{Preliminaries and Condition for Approximability} 
\label{sec:approximation_equivalence}

Throughout, we write $\log x$ to mean $\log_2 x$, and $\normO{\bfx}$ to mean the sparsity or number of nonzero entries of the vector $\bfx$. We will denote the set $\set{1, 2, \dotsc, n}$ by $[n]$. An $\ell$-subset of a set $S$ is a simply a subset of $S$ of size $\ell$.

For the estimation problem defined in \cref{sec:intro} in all models, we  have the following necessary and sufficient condition on the matrix $M$ for it to be used to $\Delta$-approximate $d$, the sparsity of the vector in question.

{\theorem \label{estim_prob} Fix $\Delta > 1$. Let $M \in \F^{m\times n}$ be a matrix such that there exists a decoder producing a $\Delta$-approximation $\hat{d}$ of $d = \normO{\bfx}$ from observing $\bfb = M \odot \bfx$, for any $\bfx \in \F^n$, and assuming $D$ to be a known upper bound on $d$. Consider any two defective vectors $\bfv_1, \bfv_2 \in \F^n$ such that $\normO{\bfv_2} \leq \normO{\bfv_1} \leq D$. Then $M$  must have the property that
\begin{equation}\label{matrix_equiv_estim}
\frac{\normO{\bfv_1}}{\normO{\bfv_2}} > \Delta^2 \implies M \odot \bfv_1 \ne M \odot \bfv_2.
\end{equation}
Conversely, for any matrix $M$ satisfying the above property, there exists a decoder producing an estimate $\hat{d}$ that satisfies the approximation criteria in \cref{estimation_crit}.
}
\begin{proof}
	In the forward direction, given a matrix $M$ we show that if \cref{matrix_equiv_estim} is not satisfied for two vectors, then no estimation algorithm satisfying \cref{estimation_crit} can exist. If \cref{matrix_equiv_estim} is not satisfied for two vectors $\bfv_1, \bfv_2 \in \F^n$, then $\frac{\normO{\bfv_1}}{\normO{\bfv_2}} > \Delta^2$ and $M\odot \bfv_1 = M\odot \bfv_2$. Then a deterministic decoder must output the same estimate $\hat{d}$ when observing $M\odot \bfv_1$ and $M\odot \bfv_2$, but $\normO{\bfv_1}$ and $\normO{\bfv_2}$ differ by more than a $\Delta^2$ factor, so whatever $\hat{d}$ is output will violate \cref{estimation_crit} for one of $\bfv_1$ or $\bfv_2$. Thus no decoding scheme can deterministically produce an estimate $\hat{d}$ satisfying \cref{estimation_crit} in this case.

For the converse assume \cref{matrix_equiv_estim} holds for the matrix $M$. When observing the result vector $\bfb$, we define
\[
\hat{d} = \sqrt{\left( \max_{\substack{\bfy \in \F^n \\ \normO{\bfy} \leq D \\ \bfb = M \odot \bfy}} \normO{\bfy} \right) \left( \min_{\substack{\bfy \in \F^n \\ \bfb = M \odot \bfy}} \normO{\bfy} \right) },
\]
i.e., we estimate $d$ by the geometric mean of the weights of the minimum and maximum weight vectors $\bfy$ with weight $\leq D$ and $M \odot \bfy = \bfb$.
	
	From \cref{matrix_equiv_estim} we are guaranteed that the estimate above is within a $\Delta$ factor of both the lowest and highest weight vectors $\bfy$ with $\bfb = M \odot \bfy$, so will satisfy \cref{estimation_crit}.
\end{proof}

Note that the above proof of decoder existence does not imply the existence of an efficient decoder; the obvious implementation of the decoder described in the proof would take time exponential in $n$ to determine the minimum and maximum weight vectors $\bfy$ with $\bfb = M \odot \bfy$. We will see that despite this, efficient decoding is possible for some specific matrices with this property.

\section{Defective Approximation for Group Testing} 
\label{sec:group_testing}

Throughout this section we take $\F = \F_2$ and assume that the operation $\odot$ denotes logical OR as defined in \cref{sec:intro}. For a subset $S \subseteq [n]$, we write $\bfv(S)$ to mean the unique binary vector with support $S$.  We establish upper and lower bounds on the number of rows of a matrix capable of $\Delta$-approximating $d$ without error. As is standard in non-adaptive group testing, we will typically assume that an initial upper bound $D$ on the number of defectives $d$ is known. We then show that the number of rows $m$ of the matrix $M$ in \cref{measurement} must satisfy
$m = \Omega (D \log \frac{n}{D}).$
On the other hand, we show that there exists a matrix $M$ with $m = O(D \log D \log n)$ rows satisfying \cref{matrix_equiv_estim}. When no upper bound on the number of defectives is known, we can take $D=n$, and in this scenario our bound shows that a linear (in $n$) number of measurements is required to find an estimate $\hat{d}$ satisfying \cref{estimation_crit}. This implies only constant factor improvement is possible, as $n$ measurements suffice to determine $d$ exactly.

\begin{theorem} 
\label{pool_lb_D}
Suppose $M \in \F_2^{m\times n}$ is a matrix capable of $\Delta$-approximating $d$ when $d \leq D \leq \frac{n}{2}$. Then it is necessary that
\begin{align*}
m &\geq \left(\frac{D}{\Delta^2} - 1\right) \log \frac{n}{D - \Delta^2} - \left(\frac{D}{\Delta^2} -1\right) \log e \\
&= \Omega\left(\frac{D}{\Delta^2} \log \frac{n}{D}\right).
\end{align*}
\end{theorem}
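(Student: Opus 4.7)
The plan is to combine the characterization in \cref{estim_prob} with a simple counting argument: for a well-chosen size $k$, lower bound the number of distinct OR-images produced by the $k$-subsets of $[n]$, and compare this to the trivial upper bound $2^m$ on the number of possible output vectors. By \cref{estim_prob}, $M \odot \bfv(S_1) \ne M \odot \bfv(S_2)$ whenever $\max\set*{|S_1|,|S_2|}\le D$ and $|S_1|/|S_2|>\Delta^2$; this is the only structural property of $M$ that I will invoke.

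For each $\bfb\in\set*{0,1}^m$, define the closure
\[
T^*(\bfb)=\Set{j\in[n]\given \text{column } j \text{ of } M \text{ is entrywise } \le \bfb}.
\]
Two easy observations follow directly from the definition of logical-OR measurements. First, if $M\odot\bfv(S)=\bfb$, then necessarily $S\subseteq T^*(\bfb)$, since any column indexed by $S$ with a $1$ outside $\supp(\bfb)$ would force the OR above $\bfb$. Second, if $S\subseteq S'\subseteq T^*(\bfb)$ and $M\odot\bfv(S)=\bfb$, then $M\odot\bfv(S')=\bfb$, because enlarging $S$ inside $T^*(\bfb)$ ORs in only columns that are $\le\bfb$, while monotonicity prevents the image from shrinking.

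Now fix a positive integer $k$ with $\lfloor\Delta^2 k\rfloor+1\le D$; at the end I will take $k\approx D/\Delta^2-1$. Key claim: for every $\bfb$ that is the image of some $k$-subset, $|T^*(\bfb)|\le \lfloor\Delta^2 k\rfloor$. For if $|T^*(\bfb)|\ge\lfloor\Delta^2 k\rfloor+1$, I could enlarge the offending $k$-subset $S$ inside $T^*(\bfb)$ to a set $S'$ of size exactly $\lfloor\Delta^2 k\rfloor+1$; the second observation forces $M\odot\bfv(S')=\bfb$, yet $|S'|\le D$ while $|S'|/|S|>\Delta^2$, contradicting the condition of \cref{estim_prob}. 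Consequently each achievable image arises from at most $\binom{|T^*(\bfb)|}{k}\le\binom{\lfloor\Delta^2 k\rfloor}{k}$ distinct $k$-subsets.

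Since at most $2^m$ images are possible and at least $\binom{n}{k}/\binom{\lfloor\Delta^2 k\rfloor}{k}$ are attained,
\[
m\ge \log\binom{n}{k}-\log\binom{\lfloor\Delta^2 k\rfloor}{k}\ge k\log\frac{n}{k}-k\log(e\Delta^2),
\]
using $\binom{n}{k}\ge(n/k)^k$ and $\binom{a}{k}\le(ea/k)^k$. Substituting $k=D/\Delta^2-1$ (rounded to an integer; the off-by-one absorbs into the stated additive $-(D/\Delta^2-1)\log e$ term) gives $e\Delta^2 k=e(D-\Delta^2)$, so the right-hand side becomes $(D/\Delta^2-1)\log(n/(D-\Delta^2))-(D/\Delta^2-1)\log e$, as required; the asymptotic $\Omega(\tfrac{D}{\Delta^2}\log\tfrac{n}{D})$ form then uses $D\le n/2$. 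Conceptually the whole proof is the closure observation plus a single counting step; the only mild annoyance is keeping the integer rounding in $k$ and $\lfloor\Delta^2 k\rfloor$ consistent with the precise constants in the theorem statement.
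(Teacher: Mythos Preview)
Your proof is correct and follows essentially the same route as the paper. Both arguments reduce to the inequality $2^m \ge \binom{n}{k}/\binom{\Delta^2 k}{k}$ with $k \approx D/\Delta^2 - 1$, and both obtain it by showing that all $k$-subsets with the same OR-image are contained in a single set of size at most $\Delta^2 k$. The paper phrases this via posets---showing each equivalence class (partition part) is closed under union and hence has a unique maximal element of size $\le \ell\Delta^2$---whereas you define that maximal element directly as the closure $T^*(\bfb)$ and bound its size using \cref{estim_prob}; your $T^*(\bfb)$ is precisely the paper's unique maximal element, so the two presentations are equivalent.
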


\begin{proof}
	Assume two sets $S_1, S_2 \subseteq [n]$ satisfy $M \odot \bfv(S_1) = M \odot \bfv(S_2)$.
	Then using the definition of the operation $\odot$ as logical OR, we have
	$M \odot \bfv(S_1) = M \odot \bfv(S_2) = M \odot \bfv(S_1 \cup S_2).$

	Let $\bb{P}_D([n])$ denote the set $\set{A \subseteq [n]:\abs{A}\leq D}$ of possible defective subsets with size at most $D$. Let $\cP_1, \cP_2, \ldots, \cP_t$ denote a partition of $\bb{P}_D([n])$ such that $A, B\in \cP_i$ if and only if $M \odot \bfv(A) = M \odot \bfv(B).$ Then we must have
	\begin{equation}
	\label{poset_condition1}
		A, B \in \cP_i \implies A\cup B \in \cP_i.
	\end{equation}
	Since the matrix $M$ is capable of $\Delta$-approximation, by \cref{estim_prob} any two sets $A, B$ with $\abs{B}\leq \abs{A}\leq D$ in the same part of the partition $\cP_i$ must also satisfy
	\begin{equation}\label{poset_condition2}
		\abs{A} \leq \Delta^2 \cdot \abs{B}.
	\end{equation}
	

	For any matrix $M$, let $t$ be the number of distinct vectors $\bfb$ arising as $\bfb = M \odot \bfx$ for any potential vector of defectives $\bfx$. Then each row of $M$ corresponds to a test with either a positive or negative result, so there can be at most $2^m$ distinct vectors $\bfb$. Thus $t \leq 2^m$, so $m \geq \log t$. Since any partition is also a cover, we lower bound $t$ by finding a lower bound on the minimum size of a cover of $\bb{P}_D([n])$ with posets (ordered by inclusion) $P_1, P_2, \ldots, P_t$ in $\bb{P}_D([n])$ satisfying \cref{poset_condition1,poset_condition2}.

	From \cref{poset_condition2}, we see that any poset $P$ containing an $\ell$-subset cannot contain elements of size between $\ell \Delta^2$ and $D$, inclusive. Assume that $P$ contains an $\ell$-subset for $\ell < \frac{D}{\Delta^2}$, so $\ell \leq \frac{D}{\Delta^2} - 1$. Suppose that $P$ has more than one maximal element in $\bb{P}_D([n])$, and call two such elements $A$ and $B$. Since all elements of $P$ have size $\leq \ell \Delta^2 \leq D - \Delta^2$, the union of $A$ with any one element of $B \setminus A$ must be of size at most $D - \Delta^2 + 1 \leq D$. But since the union must be in the poset as well, this contradicts the maximality of $A$ and $B$. Thus for $\ell < \frac{D}{\Delta^2}$ there exists a unique maximal element in any poset containing an $\ell$-subset.

	Then as a poset $P$ containing an $\ell$-subset of $[n]$ cannot have an element of size greater than $\ell \Delta^2$, we know that the minimum size $t_\ell$ of a cover of all $\ell$-subsets of $[n]$ with posets satisfying \cref{poset_condition1,poset_condition2} is at least $t_\ell \geq {\binom{n}{\ell}} / {\binom{\ell \Delta^2}{\ell}}.$

	The minimum size of any cover of $\bb{P}_D([n])$ is at least the size of the minimum cover of the $\ell$-subsets of $[n]$, for any particular value of $\ell \leq D$. Thus, $t  \geq t_{\frac{D}{\Delta^2} - 1}$. Then as $m \geq \log t$, we have
	\begin{align*}
	m & \geq \Log{ \binom{n}{\frac{D}{\Delta^2} - 1} / \binom{D - \Delta^2}{\frac{D}{\Delta^2} - 1} } \\
			    & \geq \Log{ \br*{ \frac{n}{D/\Delta^2 - 1}}^{\frac{D}{\Delta^2} - 1} / \left(\frac{(D - \Delta^2)^{\frac{D}{\Delta^2}  -1}}{\br{\frac{D}{\Delta^2} - 1}!}\right) } \\
			   & = \left(\frac{D}{\Delta^2} - 1\right) \log n - \left(\frac{D}{\Delta^2} - 1\right) \log \left(\frac{D}{\Delta^2} -1\right) - \left(\frac{D}{\Delta^2} - 1\right) \log (D - \Delta^2) + \log \left(\left(\frac{D}{\Delta^2} - 1\right)!\right) \\
			   & \geq \left(\frac{D}{\Delta^2} - 1\right) \log \frac{n}{D - \Delta^2} - \left(\frac{D}{\Delta^2} -1\right) \log e,
	\end{align*}
	using Stirling's approximation in the last step.
\end{proof}

Now consider the case when no nontrivial upper bound on $d$ is known. When $D = n$ we have the following result.

\begin{corollary}
Fix $\Delta > 1$. Consider a matrix $M \in \F_2^{m\times n}$ that can be used for $\Delta$-approximation in the group testing model with only the trivial upper bound $D = n$ on the true number of defectives $d$. Then for large $n$, it is necessary that
\begin{equation*}
	m \geq n \left( h \left( \frac{1}{\Delta^4} \right) - \frac{1}{\Delta^2} h \left( \frac{1}{\Delta^2} \right) \right) = \Omega(n),
\end{equation*}
where $h(x)$ denotes the binary entropy function.
\end{corollary}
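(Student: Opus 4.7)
The plan is to apply the same poset-cover machinery used in the proof of \cref{pool_lb_D}, but to optimize the choice of the parameter $\ell$ differently. Recall that the proof of \cref{pool_lb_D} established the bound $t_\ell \geq \binom{n}{\ell}/\binom{\ell \Delta^2}{\ell}$ on the minimum number of posets needed to cover all $\ell$-subsets of $[n]$ while respecting conditions \cref{poset_condition1,poset_condition2}, and concluded $m \geq \log t \geq \log t_\ell$ for any valid choice of $\ell$. In the original theorem, $\ell$ was chosen to be $D/\Delta^2 - 1$, but here with $D = n$ that choice makes the outer binomial coefficient trivially close to $\binom{n}{n}$ and produces a vacuous bound.

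Instead, I would choose $\ell \approx n/\Delta^4$ (more formally $\ell = \lfloor n/\Delta^4 \rfloor$, with the floor absorbed into the $o(1)$ error terms arising from Stirling's formula). This is the smallest choice that makes $\ell \Delta^2 = n/\Delta^2$, so the poset elements above any $\ell$-subset still have size bounded by $n/\Delta^2 \leq D = n$, and the bound $t_\ell \geq \binom{n}{\ell}/\binom{\ell \Delta^2}{\ell}$ remains valid. The geometric picture is that we are covering an interior ``layer'' of the Boolean lattice at height $n/\Delta^4$, where both binomials are exponentially large in $n$ with the entropy expansion kicking in.

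The main computational step is then to apply the standard entropy approximation for binomial coefficients, $\log \binom{n}{\alpha n} = n \, h(\alpha) + o(n)$, to both numerator and denominator:
\begin{align*}
\log \binom{n}{n/\Delta^4} &= n \, h\!\br*{\tfrac{1}{\Delta^4}} + o(n), \\
\log \binom{n/\Delta^2}{n/\Delta^4} &= \tfrac{n}{\Delta^2} \, h\!\br*{\tfrac{(n/\Delta^4)}{(n/\Delta^2)}} + o(n) = \tfrac{n}{\Delta^2} \, h\!\br*{\tfrac{1}{\Delta^2}} + o(n).
\end{align*}
Subtracting gives $m \geq \log t_\ell \geq n\left(h(1/\Delta^4) - (1/\Delta^2) h(1/\Delta^2)\right) - o(n)$, which is the stated bound for large $n$.

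The only thing worth checking carefully is that the constant $h(1/\Delta^4) - (1/\Delta^2) h(1/\Delta^2)$ is strictly positive for every $\Delta > 1$, so that the bound is genuinely $\Omega(n)$ and not vacuous. This can be verified by expanding $h$ as $h(x) = x\log(1/x) + (1-x)\log(1/(1-x))$: the leading contribution to $h(1/\Delta^4)$ is $(4\log \Delta)/\Delta^4$, while the leading contribution to $(1/\Delta^2) h(1/\Delta^2)$ is only $(2\log \Delta)/\Delta^4$, so the difference is positive (and in fact $\Theta(\log \Delta / \Delta^4)$ as $\Delta \to \infty$). This monotonicity/positivity check is really the only part that requires any thought beyond re-running the previous proof with a new value of $\ell$.
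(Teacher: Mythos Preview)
Your proposal is correct and matches the paper's own proof essentially line for line: the paper also invokes the bound $m \geq \log\bigl(\binom{n}{\ell}/\binom{\ell\Delta^2}{\ell}\bigr)$ from the preceding theorem, sets $\ell = n/\Delta^4$, and applies Stirling's approximation to obtain the entropy expression. The only difference is that the paper simply asserts positivity of $h(1/\Delta^4) - (1/\Delta^2) h(1/\Delta^2)$ for $\Delta>1$, whereas you sketch a verification via the leading terms; otherwise the arguments are identical.
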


\begin{proof}
From the proof of the previous theorem, we have
\begin{align*}
	m &\geq \max_{l\leq n} \log \left( \binom{n}{l} / \binom{l\Delta^2}{l} \right) 
	  \geq \log \left( \binom{n}{n/\Delta^4} / \binom{n/\Delta^2}{n/\Delta^4} \right)
\end{align*}
by setting $l = \frac{n}{\Delta^4}$. For large $n$ using Stirling's approximation, we have that the last term above converges to 
$
n \left( h \left( \frac{1}{\Delta^4} \right) - \frac{1}{\Delta^2} h \left( \frac{1}{\Delta^2} \right) \right). 
$
This expression is positive for $\Delta > 1$, so the entire expression is $\Omega(n)$ assuming $\Delta$ is a fixed constant.
\end{proof}

We now show the existence of a matrix $M$ capable of $\Delta$-approximation with $m=O(\frac{D \log D}{\log \Delta} \log n)$ rows for any $D$ and $\Delta>1$. We modify a construction of Damaschke and Muhammad \cite{damaschke2010competitive} for non-adaptive sparsity approximation with error, and show that with additional repetition of certain tests, we can achieve no error.

\begin{theorem}
\label{noiseless_UB}
Given an initial upper bound $D$ on the number of defectives, there exists a matrix $M \in \F_2^{m \times n}$ that can $\Delta$-approximate $d$ in the group testing model such that
$m = O(\frac{D \log D}{\log \Delta} \log n).$ Furthermore, this matrix has a decoding scheme that requires only $O(m) =  O(\frac{D \log D}{\log \Delta} \log n)$ time.
\end{theorem}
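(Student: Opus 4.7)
My plan is to construct $M$ probabilistically as a vertical concatenation of $L+1$ blocks of independent random Bernoulli rows, one block per geometric scale $t_i = \Delta^i$ for $i = 0, 1, \ldots, L = \lceil \log_\Delta D \rceil = O(\log D / \log \Delta)$. In block $i$, each of $r = \Theta(D \log n)$ rows has entries drawn i.i.d.\ from $\mathrm{Bernoulli}(p_i)$ with $p_i = 1/t_i$, so the total number of rows is $m = (L+1)r = O(D \log D \log n / \log \Delta)$. This is exactly the Damaschke--Muhammad layered random design, except that the number of rows per block is inflated from $\Theta(\log n)$ to $\Theta(D \log n)$ so that a union bound over all possible defective vectors becomes affordable.

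By \cref{estim_prob}, to establish $\Delta$-approximability it suffices to show that $M$ satisfies $M \odot \bfv_1 \neq M \odot \bfv_2$ for every pair of defective vectors with sparsities $k_1 = \normO{\bfv_1}$ and $k_2 = \normO{\bfv_2}$ satisfying $k_1 < k_2 \leq D$ and $k_2 > \Delta^2 k_1$. For such a pair the interval $[\sqrt{\Delta}\, k_1,\, k_2/\sqrt{\Delta}]$ has ratio exceeding $\Delta$ and therefore must contain at least one threshold $t_{i^*}$; at this scale $p_{i^*} k_1 \leq 1/\sqrt{\Delta}$ and $p_{i^*} k_2 \geq \sqrt{\Delta}$, so a single row of block $i^*$ hits exactly one of the two supports $S_1, S_2$ (thereby distinguishing them) with at least some positive constant probability $\rho = \rho(\Delta) > 0$. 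A direct calculation parametrizing by $a = |S_1 \setminus S_2|$, $b = |S_2 \setminus S_1|$, $c = |S_1 \cap S_2|$ expresses this distinguishing probability as $(1-(1-p_{i^*})^a)(1-p_{i^*})^{k_2} + (1-(1-p_{i^*})^b)(1-p_{i^*})^{k_1}$; one then verifies $\rho > 0$ uniformly over all admissible $(a,b,c)$. The probability that all $r$ rows of block $i^*$ fail to distinguish is at most $(1-\rho)^r$, and union-bounding over the at most $\binom{n}{\leq D}^2 \leq n^{2D}$ pairs forces $r = \Theta(D\log n/\rho)$, after which the probabilistic method yields a deterministic $M$ with the desired property.

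For the decoder: given $\bfb = M \odot \bfx$, it computes $N_i$, the number of $1$-entries among rows of $\bfb$ in block $i$, via a single linear pass, taking $O(m)$ time. For any $\bfx$ with $\normO{\bfx} = d$ we have $\mathbb{E}[N_i/r] = 1-(1-p_i)^d$, which is monotone decreasing in $i$ and transitions from ``near $1$'' to ``near $0$'' around the level with $t_i \approx d$. Chernoff concentration combined with a union bound over all $d \leq D$ shows $N_i/r$ is uniformly close to its expectation, so outputting $\hat{d} = t_{i^*}$ for the smallest $i^*$ with $N_{i^*}/r$ below a fixed threshold meets the $\Delta$-approximation criterion \cref{estimation_crit}.

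The principal technical challenge I anticipate is the overlap calculation verifying that $\rho$ is bounded below by a positive constant independently of $n$ and $D$: the two terms in the distinguishing probability can each nearly vanish (the first when $S_1 \subseteq S_2$, the second when $|S_1 \cap S_2|$ is close to $k_1$), so one must trade them off using the gap $k_2 > \Delta^2 k_1$ to guarantee that at least one remains bounded below uniformly. Pinning down this constant is what ultimately yields the clean scaling $m = O(D \log D \log n / \log \Delta)$ claimed in the theorem.
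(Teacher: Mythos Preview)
Your construction and analysis are correct and track the paper's proof closely: both use a random Bernoulli design stratified by geometric scales, inflate the number of rows per scale so that a union bound over all $\binom{n}{\le D}$ defective sets succeeds, and decode by locating the scale at which the fraction of positive tests crosses a fixed threshold. There are two genuine differences worth noting. First, the paper never routes through \cref{estim_prob}: it directly bounds the probabilities of the decoder's bad events $E_1$ (overestimate) and $E_2$ (underestimate) using Chernoff/KL tail bounds on the binomial counts. Your detour through the pairwise distinguishing condition $M\odot\bfv_1\neq M\odot\bfv_2$ is a valid alternative for the existence half, but it is redundant once you also supply the threshold decoder and its concentration analysis---that analysis alone already establishes both existence and efficient decoding. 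Second, the paper parametrizes with a fixed base $b$ (so $O(\log_b D)$ blocks of $t=O(\frac{D}{\log_b\Delta}\log n)$ rows each), whereas you use base $\Delta$ (so $O(\log_\Delta D)$ blocks of $r=\Theta(D\log n)$ rows); the $1/\log\Delta$ factor ends up in $t$ for them and in $L$ for you, with the same product.

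Two small corrections. Your ``principal technical challenge'' is easier than you fear: at the scale $t_{i^*}$ you have $b=\abs{S_2\setminus S_1}\ge k_2-k_1>(1-\Delta^{-2})k_2$, so $p_{i^*}b\ge\sqrt{\Delta}(1-\Delta^{-2})$ is bounded below, and together with $(1-p_{i^*})^{k_1}\ge 1-p_{i^*}k_1\ge 1-\Delta^{-1/2}$ the second term alone already gives a uniform lower bound $\rho(\Delta)>0$---no trade-off with the first term is needed. And in your decoder paragraph, the union bound must range over all defective \emph{sets} of size at most $D$ (of which there are up to $n^D$), not merely over the $D$ possible values of $d$; your choice $r=\Theta(D\log n)$ is exactly what that larger union bound requires, so the conclusion is unaffected. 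Finally, be aware that $\rho$ depends on $\Delta$; absorbing $1/\rho(\Delta)$ into the implied constant while simultaneously displaying $1/\log\Delta$ explicitly leaves the $\Delta$-dependence of the final bound somewhat ambiguous (the paper's parametrization with fixed base $b$ handles this more cleanly by pushing the $\Delta$-dependence into a single product over $\sim\log_b\Delta$ indices).
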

\begin{proof}

	Let $\be(p)$ denote the Bernoulli variable on $\F_2$ such that $\prob{\be(p)=1}=p$. Let $\hat{d}$ denote the estimate of $d$, which is specific to the matrix $M$. We show that a matrix constructed in a random way combined with a specific estimator $\hat{d}$ works simultaneously for all vectors of defectives with nonzero probability. To this end, we define the bad events $E_1$ and $E_2$ to be the events that our estimate is too far off, the complements of the estimation criteria in \cref{estimation_crit}:
	\begin{subequations}\label{bad_events}
	\begin{align}		
		&E_1 : \frac{\hat{d}}{d} > \Delta \label{bad_E1}, \quad \\
		&E_2 : \frac{\hat{d}}{d} < \frac{1}{\Delta} \label{bad_E2}.
	\end{align}
	\end{subequations}

	\noindent{\textbf{Construction:}} We modify the random construction in \cite{damaschke2010competitive} to construct $M \in \F_2^{m \times n}$. For this matrix $M$, we show that the probability of either bad event occurring over all defective vectors with $d\leq D$ is strictly less than $1$ when $m = O(\frac{D \log D}{\log \Delta} \log n)$. Thus there exists a matrix $M$ with $m$ rows for which none of the bad events in \cref{bad_events} occur for any defective vector of weight at most $D$.

Consider a fixed parameter $b>1$, and let $\delta := \log_b \Delta$. Take $s$ to be a fixed parameter such that $s < \delta -1$, and let $l \in \{\floor{\log_b \ln 2}, \floor{\log_b \ln 2} + 1, \dotsc, \ceil*{\log_b{D}}\}$ denote indices for subsets of tests. For each index $l$, we construct $t$ random identically and independently  distributed (iid) tests such that each element is selected in the test for index $l$ with probability $1-(1-\frac{1}{D})^{b^l}$. Then the total number of such indices is $N := \ceil*{\log_b{D}}-\floor{\log_b \ln 2}+1$, so the matrix $M$ consists of $t N$ rows with the elements in row indices $j\in \set{(l-1)t+1, (l-1)t+2, \dotsc, lt}$ selected randomly and independently as $\be(1-(1-\frac{1}{D})^{b^l})$. Thus the probability of row $j$ for $j\in \set{(l-1)t+1, (l-1)t+2, \dotsc, lt}$ having a negative result (containing no defectives) is $q_l(d) := (1-\frac{1}{D})^{d b^l}$. Now, there exists an index $\ell(d) \in \set{\floor{\log_b \ln 2}, \floor{\log_b \ln 2} + 1, \dotsc, \ceil*{\log_b{D}}}$ (or simply $\ell$) such that
	$
		\frac{1}{2} \leq q_{\ell-s}(d) < \frac{1}{2^{1/b}},
	$
	because $q_{l+1} = q_l^b$.

	Let $L \in \set{\floor{\log_b \ln 2}, \floor{\log_b \ln 2} + 1, \dotsc, \ceil*{\log_b{D}}}$ denote the random variable corresponding to the maximum index for which the majority of test results were negative. Our decoding algorithm will be to take the estimate $\hat{d}$ of $d$ such that $q_{L-s}(\hat{d})=\frac{1}{2}$, i.e.,
	$
		\hat{d} = \frac{-1}{b^{L-s} \log(1-\frac{1}{D})}.
	$
	Then the probability of bad event $E_1$ for a defective set of size $d$ is
	\begin{align*}
		\prob{E_1} &= \prob{\frac{\hat{d}}{d} > \Delta} \\
				   &\leq \prob{\hat{d} \geq d \Delta} \\
				   &= \prob{\frac{1}{2} \leq \br*{1-\frac{1}{D}}^{d \;b^{L+\delta-s}}} \\
				   &= \prob{\frac{1}{2} \leq q_{L+\delta-s}(d)}\\
				   &= \prob{L \leq \ell(d)-\delta } \\
				   &= \prod_{\ell(d)-\delta<l\leq \ceil*{\log_b{D}}} \hspace{-1.5mm} F \left(\ceil*{\frac{t-1}{2}};t,q_l(d)\right),
	\end{align*}
	where $F(k;n,p)=\prob{X\leq k}$ denotes the cumulative distribution function for $X = \sum_{i=1}^n X_i$ for $X_i \sim \be(p)$ iid. Similarly, for bad event $E_2$ we have
	\begin{align*}
		\prob{E_2} &= \prob{\frac{\hat{d}}{d} < \frac{1}{\Delta}} \\ 
				   &\leq \prob{\hat{d} \leq d /\Delta} \\
				   &= \prob{\frac{1}{2} \geq \br*{1-\frac{1}{D}}^{d \;b^{L-\delta-s}}} \\
				   &= \prob{\frac{1}{2} \geq q_{L-\delta-s}(d)}\\
				   &= \prob{L \geq \ell(d)+\delta } \\
				   &= \sum_{l= \ell(d)+\delta }^{\ceil*{\log_b{D}}} F \left(\ceil*{\frac{t-1}{2}};t,1-q_l(d)\right).
	\end{align*}

	Thus the probability of the event $\tilde{E_2},$ defined to be the union of the events $E_2$ for all defective sets of size $d\leq D$, can be upper bounded by union bound as follows:
	\begin{align*}
		\prob{\tilde{E_2}} &\leq \sum_{1\leq i\leq D} \binom{n}{i} \prob{E_2}  \\
						   &\leq \sum_{i} \binom{n}{i} \sum_{\ell(d)+\delta \leq l\leq \ceil*{\log_b{D}}} F\left(\ceil*{\frac{t-1}{2}};t,1-q_l(i)\right)  \\
						   &\leq \sum_{i} \binom{n}{i} \cdot N \cdot F \left(\ceil*{\frac{t-1}{2}};t,1-q_{\ceil*{\log_b{D}}}(i)\right)  \\
						   &\leq \sum_{i} \binom{n}{i} \cdot N \cdot \Exp{-t \;D\left(\frac{1}{2}||1-q_{\ceil*{\log_b{D}}}(i)\right)},
	\end{align*}
	where $D(a||p)\defeq a \log(\frac{a}{p})+ (1-a) \log(\frac{1-a}{1-p})$ denotes the KL divergence between $a$ and $p$, and the last inequality follows from the bound on $F(k;n,p)$ in \cite{Arratia1989}. Then applying the inequalities $\binom{n}{i} \leq n^i$ and $q_{\ceil{\log_b D}} \leq e^{-i}$, we have that $\prob{\tilde{E_2}}$ is at most
	\begin{align*}
		N \sum_{1\leq i\leq D} \Exp{i \ln n +\frac{t}{2} \left(1 + \log(1-q_{\ceil{\log_b D}}) - \frac{i}{\ln 2}\right)}.
	\end{align*}
	Now, it can be seen that for $t=O(\log n)$, $\prob{\tilde{E_2}}$ goes to $0$.

	Similarly, we bound the probability of $\tilde{E_1}$,  the union of the events $E_1$ for all defective sets of size $d\leq D$, recalling that $\delta>s+1$:
	\begin{align*}
		\prob{\tilde{E_1}} &\leq \sum_{1\leq i\leq D} \binom{n}{i} \prob{E_1}  \\
						   &\leq \sum_{i} \binom{n}{i} \prod_{\ell(i)-\delta < l\leq \ceil*{\log_b{D}}} F \left(\ceil*{\frac{t-1}{2}};t,q_l(i)\right)  \\
						   &\leq \sum_{i} \binom{n}{i} \prod_{\ell(i)-\delta < l\leq \ell(i)-s} F \left(\ceil*{\frac{t-1}{2}};t,q_{l}(i)\right)  \\
						   &\leq \sum_{i} \binom{n}{i} \Exp{-\frac{\delta - s - 1}{2 q_{\ell(i)-s-1}} \cdot t\left(q_{\ell(d)-s-1}-\frac{1}{2}\right)^2},
	\end{align*}
	where the last inequality follows from the Chernoff bound applied to the binomial distribution. Thus $\prob{\tilde{E_1}}$ is at most
	\begin{align*}
		&D \cdot \Exp{D \ln n- \frac{\delta - s - 1}{2 q_{\ell(d)-s-1}} \cdot t \left(q_{\ell(d)-s-1}-\frac{1}{2}\right)^2} \\
						   \leq \  &D \cdot \Exp{D \ln n- \frac{\delta - s - 1}{2^{1-b^{-1}}} \cdot t \left(\frac{1}{2^{b^{-1}}}-\frac{1}{2}\right)^2}.
	\end{align*}
	It can be seen that for $t=O(\frac{D}{\delta} \log n)$, $\prob{\tilde{E_1}}$ goes to $0$.

	Therefore when the number of tests $m = O(\frac{D \log D}{\log \Delta}  \log n)$, there exists a test matrix and estimation algorithm which estimates $d$ within a multiplicative factor of $\Delta$ for all defective vectors of weight $\leq D$.

The decoding requires only computing a function of the largest index $l$ for which the corresponding block of $t$ tests had a majority of test results negative, and this index can easily be determined in a single pass over the result vector, requiring $O(m)$ time.
\end{proof}

\section{Group Testing Approximation with Output Noise} 
\label{sec:group_testing_with_output_noise}
	We now consider the group testing scenario with the additional complication that the output $\bfb = M\odot \bfx$ is corrupted by noise. Assume that the output vector $\bfy$ contains at most $e_0$ false positives (0s flipped to 1s) and at most $e_1$ false negatives (1s flipped to 0s). We will aim to bound the number of tests required to $\Delta$-approximate the number of defectives in such a noisy setting. 

	Denote the support set of a vector $\bfx \in \F_2^m$ as $\supp{\bfx} \subseteq [n]$. We first give a necessary and sufficient condition under which a non-adaptive group testing matrix and deterministic estimator capable of $\Delta$-approximation exist in the presence of bounded asymmetric noise.

	{\definition For  $\bfx, \bfy \in \F_2^m$, $(\bfx, \bfy)$ are called $(e_0,e_1)$-far iff $\abs{\supp{\bfy} - \supp{\bfx}} > e_0 $ or $\abs{\supp{\bfx} - \supp{\bfy}} > e_1$. If neither condition occurs, then $(\bfx, \bfy)$ are said to be $(e_0, e_1)$-close.}
\vspace{2mm}

We note that closeness is in general not symmetric; if $(\bfx, \bfy)$ are $(e_0, e_1)$-close, then $(\bfy, \bfx)$ are $(e_1, e_0)$-close, but need not be $(e_0, e_1)$-close.

	{\lemma \label{asymmetric_balls} Consider two vectors $\bfx_1, \bfx_2 \in \F_2^{n}$ such that $(\bfx_1,\bfx_2)$ are $(e_0+e_1,e_0+e_1)$-close. Then there exists a vector $\bfy \in \F_2^n$ such that both $(\bfx_1,\bfy)$ and $(\bfx_2,\bfy)$ are $(e_0,e_1)$-close.}
	\begin{proof}
		Let $X_1 \defeq \supp{\bfx_1},$ $X_2 \defeq \supp{\bfx_2}$. We define the support of the vector $\bfy$, $Y \defeq \supp{\bfy}$ as
		$$ Y \defeq (X_1 \cap X_2) \cup P \cup R, $$
		where $P \subseteq X_2 \setminus X_1$ such that $\abs{P} = \min\br{\abs{X_2 \setminus X_1}, e_0}$ and $R \subseteq X_1 \setminus X_2$ such that $\abs{R} = \min\br{\abs{X_1 \setminus X_2}, e_0}.$

		Now $\supp{Y \setminus X_1} = P$ and $\supp{Y \setminus X_2} = R$. Thus, $\abs{Y \setminus X_1} = \abs{P} \leq e_0$ and $\abs{X_1 \setminus Y} = \abs{X_1\setminus X_2}-\abs{R} \leq e_1$. Similarly, $\abs{Y \setminus X_2} = \abs{R} \leq e_0$ and $\abs{X_2\setminus Y} = \abs{X_2 \setminus X_1}-\abs{P} \leq e_1$.
	\end{proof}
	
	We can now state the condition for a matrix to be capable of $\Delta$-approximation in this noisy setting.

	{\theorem Let $\Delta > 1$. Let $M \in \F^{m\times n}$ be a matrix such that there exists a decoder producing a $\Delta$-approximation $\hat{d}$ of $d = \normO{\bfx}$ for any $\bfx \in \mathbb{F}^n$ of weight at most $D$ when observing 
		$$\bfy = M \odot \bfx+ \bfn,$$
		where $\bfn$ denotes a noise vector such that $(M \odot \bfx,\bfy)$ are $(e_0,e_1)$-close. Then for any two sets $S_1, S_2 \subseteq [n]$ with $|S_2| \leq |S_1| \leq D$, $M$ must satisfy
		\begin{align}\label{matrix_equiv_estim_noisy}
		\frac{\abs{S_1}}{\abs{S_2}} > \Delta^2 \implies (M \odot \bfv(S_1), M \odot \bfv(S_2)) \nonumber \\ \mbox{ are $(e_0+e_1, e_0+e_1)-$far. }
		\end{align}
		Conversely, for any matrix $M$ satisfying \cref{matrix_equiv_estim_noisy}, there exists a decoder producing an estimate $\hat{d}$ that satisfies the approximation criteria in \cref{estimation_crit} for all $\bfx$ of weight at most $D$.}
	
	\begin{proof}
	Consider a matrix $M \in \F_2^{m\times n}$ that does not satisfy equation \cref{matrix_equiv_estim_noisy}. Then there must exist sets $S_1, S_2 \subseteq [n]$ such that $\abs{S_1}/\abs{S_2} > \Delta^2$ and $(M\odot \bfv(S_1),M\odot \bfv(S_2))$ are $(e_0+e_1, e_0+e_1)$-close. Thus, there exists a vector $\bfy \in \F_2^m$ such that $(M\odot \bfv(S_1),\bfy)$ and $(M\odot \bfv(S_2),\bfy)$ are both $(e_0,e_1)$-close from \cref{asymmetric_balls}. As a result, there exist valid noise vectors $\bfn_1, \bfn_2$ with the property that $\bfy = M \odot \bfx_1 + \bfn_1 = M \odot \bfx_2 + \bfn_2$. Since the estimate $\hat{d}$ must be the same for both cases, the estimate cannot satisfy the approximation criteria in both cases simultaneously.

	For sufficiency, consider the following set $\cS = \set{S \subset [n] : (M\odot \bfv(S),\bfy) \mbox{ are $(e_0,e_1)-$close}}$. We take 
	$$\hat{d} = \sqrt{\left(\min_{S\in \cS} \abs{S}\right)  \left(\max_{S\in \cS} \abs{S}\right)}$$

	Our estimate $\hat{d}$ is then within a $\Delta$ factor of the cardinality of any set in $\mathcal{S}$, so as the actual defective set must belong to $\cS$, $\hat{d}$ satisfies the approximation criteria.
	\end{proof}
	
	For an upper bound on the minimum size of such a matrix, we can make a simple modification to the construction used in the noiseless case. Recall that there, we constructed a matrix with $l = O(\log D)$ ``indices,'' where each index consisted of $t = O(D \log n)$ tests, with each element selected for the test uniformly at random with probability $p_l$. Then to decode, we computed a function of the largest index $l$ for which the corresponding block of $t$ tests had the majority of test results negative. Adding a noise vector to the output flips the results of at most $e_0 + e_1$ tests, so to ensure this construction is resilient to noise, we simply add $2(e_0+e_1)$ tests to each index $l$, again with each element chosen uniformly to be tested with probability $p_l$. Then we are guaranteed that the majority of test results for each index will be the same as it would have been in the noiseless case, so the argument of \cref{noiseless_UB} applies to show correctness. Thus we have the following theorem.
	
	\begin{theorem}
	\label{noisy_UB}
	There exist $m \times n$ matrices capable of $\Delta$-approximation of all vectors with at most $D$ defectives, when the output is corrupted by noise so that at most $e_0$ negative results become positive, and at most $e_1$ positive results become negative, with
	\begin{equation*}
	m = O\left(\frac{D \log D}{\log \Delta} \log n + (e_0 + e_1) \log D\right)
	\end{equation*}
	rows. Furthermore, some such matrices have decoding schemes that require only $O(m)$ time.
	\end{theorem}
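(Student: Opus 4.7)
My plan is to modify the randomized construction of \cref{noiseless_UB} by appending $2(e_0+e_1)$ additional i.i.d.\ rows to each of the $N = \Theta(\log D)$ index blocks, with each new row in block $l$ sampled entrywise as $\be(p_l)$ where $p_l = 1 - (1 - 1/D)^{b^l}$ (the same parameter used by the original rows of that block). The resulting row count is $N(t + 2(e_0+e_1)) = O(\frac{D\log D}{\log\Delta}\log n + (e_0+e_1)\log D)$, which matches the claimed bound. The decoder is literally the same as in \cref{noiseless_UB}: scan the received vector, find the largest block index $L$ whose tests have a strict majority of negative results, and output $\hat d$ as the value satisfying $q_{L-s}(\hat d) = 1/2$.

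For correctness I would re-run the probabilistic analysis of \cref{noiseless_UB} against an adversary with a budget of $e_0+e_1$ total flips. The key observation is that flipping a single test result changes the signed difference between negative and positive counts in its block by exactly two, so the noisy majority of a block agrees with the noiseless majority whenever the noiseless margin exceeds $2(e_0+e_1)$. Writing $t' = t + 2(e_0+e_1)$, the probability that the adversary can reverse the majority of any single block is upper bounded by the probability that a binomial with $t'$ trials and parameter $p_l$ lies within $e_0+e_1$ of $t'/2$, i.e., exactly the Chernoff/KL estimates from \cref{noiseless_UB} with the threshold shifted from $t'/2$ by $O(e_0+e_1)$. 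The same bounds apply because in the indices of interest $p_l$ is bounded away from $1/2$ by a constant depending only on $b$ and $s$, so an additive threshold shift of $O(e_0+e_1)$ is absorbed as a lower-order term and the tail probability still decays like $\exp(-\Omega(t'))$.

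Applying the same union bound over indices and over defective sets of size at most $D$ as in the noiseless argument, the total failure probability is $o(1)$ for the choice $t = O(\tfrac{D}{\log\Delta}\log n)$, so a matrix with the required properties exists by the probabilistic method. Decoding is again a single pass over the received vector, yielding runtime $O(m)$. The main delicate step will be verifying that the threshold shift by $O(e_0+e_1)$ does not destroy the Chernoff bounds; this will need some care when $e_0+e_1 \gg t$, but in that regime $t' = \Theta(e_0+e_1)$ and rerunning the noiseless analysis at this larger block size still yields exponentially small failure probability, since $p_l$ and $1-p_l$ remain bounded away from $1/2$ on the indices that drive the decoder's accuracy.
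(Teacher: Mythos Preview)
Your approach is essentially the same as the paper's: append $2(e_0+e_1)$ extra i.i.d.\ tests to each of the $N=\Theta(\log D)$ blocks, keep the same majority-based decoder, and invoke the noiseless analysis of \cref{noiseless_UB} to conclude existence. Your discussion of shifting the Chernoff threshold by $O(e_0+e_1)$ and checking that $p_l$ stays bounded away from $1/2$ on the decisive indices is in fact more careful than the paper, which simply asserts that the enlarged blocks make the majority robust to $e_0+e_1$ flips and immediately defers to \cref{noiseless_UB}.
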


	Next we turn to a lower bound on the number of rows of such a matrix. Consider the undirected graph $G = (V,\cE)$ on the vertex set $V = \set{A \subseteq [n] : \abs{A}\leq D}$. The edges in $G$ are defined as follows: $(A,B) \in \cE$ if and only if $(M\odot \bfv(A),M\odot \bfv(B))$ are $(e_0+e_1,e_0+e_1)$-close. This is a ``confusion graph'' for the set of possible inputs; by \cref{asymmetric_balls} there is an edge between vertices corresponding to two inputs whenever there exists a vector that both the corresponding outputs could map to after being corrupted by noise, so these inputs are ``confusable.'' Then an independent set $\cI$ in the graph $G$ corresponds to a set of inputs for which the corresponding outputs are all pairwise $(e_0+e_1,e_0+e_1)$-far. 

		{\lemma \label{prop_graph} For a matrix $M$ satisfying the noisy approximation criteria in \cref{matrix_equiv_estim_noisy}, the confusion graph $G$ as defined above must satisfy the following conditions, for $A,B \in V$:
		\begin{itemize}
			\item[i)] If $(A, B) \in \cE$ and $|A \cup B| \leq D$, then $(A\cup B,A) \in \cE$ and $(A\cup B,B) \in \cE$.
			\item[ii)] If $\frac{\abs{A}}{\abs{B}} \not\in \brac*{\frac{1}{\Delta^2},\Delta^2}$, then $(A,B) \not\in \cE$.
	 	\end{itemize}}
 	\begin{proof} 		
		For the first condition, by definition of the edge set we have $(A, B)$ are $(e_0 + e_1, e_0 + e_1)$-close. Then (as long as $A \cup B$ corresponds to a vertex in $G$, guaranteed by $|A \cup B| \leq D$), $(A \cup B, A)$ are $(0, e_0 + e_1)$-close, so have an edge between them, and similarly for $(A \cup B, B)$. The second condition follows immediately from the fact that $M$ satisfies \cref{matrix_equiv_estim_noisy}.
 	\end{proof}
	Thus, any independent set in $G$ corresponds to a packing of $\F_2^m$ with balls of (asymmetric) radius $(\frac{e_0+e_1}{2},\frac{e_0+e_1}{2})$, defined as follows:
	$$B_{as}(\bfx,(r_1, r_2)) \defeq \set{\bfy \in \F_2^m : \abs{\supp{\bfx}-\supp{\bfy}}\leq r_1 \mbox{ and } \abs{\supp{\bfy}-\supp{\bfx}}\leq r_2 }.$$

	The size of the largest independent set, $\cI_{\max}$, can be upper bounded as
	\begin{equation}
		\abs{\cI_{\max}} \leq A\left(m,\left(\frac{e_0+e_1}{2},\frac{e_0+e_1}{2}\right)\right),
	\end{equation}
	where $A(m,(\frac{e_0+e_1}{2},\frac{e_0+e_1}{2}))$ denotes the size of the maximum packing of $\F_2^m$ with balls of radius $(e_0+e_1,e_0+e_1)$. 

	Then we have
	\begin{align}
		\min_{\bfx\in \F_2^m}\abs{B_{as}(\bfx,(r, r))} &= \min_{\bfx\in \F_2^m}\sum_{i=0}^{\Min{r,\abs{\bfx}}} \sum_{j=0}^{\Min{r,n-\abs{\bfx}}}  \binom{\abs{x}}{i} \binom{n-\abs{x}}{j} \nonumber \\
						&\geq \min_{\bfx\in \F_2^m}  \binom{\abs{x}}{\Min{r,\frac{\abs{\bfx}}{2}}} \binom{m-\abs{x}}{\Min{r,\frac{m-\abs{\bfx}}{2}}}\nonumber \\
						&\geq \min_{\substack{\bfx\in \F_2^m \\ 2r \leq |x| \leq m - 2r}} \binom{\abs{x}}{r} \binom{m-\abs{x}}{r}\nonumber \\
						&\geq \min_{\substack{\bfx\in \F_2^m \\ 2r \leq |x| \leq m - 2r}} \left(\frac{|x|}{r}\right)^r \left(\frac{m - |x|}{r}\right)^r \nonumber \\
						&\geq \frac{(2r)^r (m-2r)^r}{r^{2r}}. \nonumber
	\end{align}
	We use the sphere-packing bound to upper bound $A(m,(e_0+e_1,e_0+e_1))$:
	\begin{align}
		A(m,(e_0+e_1,e_0+e_1)) &\leq \frac{2^m}{\min_{\bfx \in \F_2^m} \abs{B_{as}(\bfx,(\frac{e_0+e_1}{2},\frac{e_0+e_1}{2}))}} \nonumber \\
							   &\leq (2^m) / \left(\frac{(e_0+e_1)^{(e_0+e_1)/2} \cdot (m-(e_0+e_1))^{(e_0+e_1)/2}}{(\frac{e_0+e_1}{2})^{e_0+e_1}}\right).
	\end{align}

	Note that since the graph $G$ is an edge super-graph of the confusion graph corresponding to the noiseless case, the lower bound on the number of partitions in \cref{pool_lb_D} can also be used to lower bound the size of an independent set in $G$. Thus, we have the following lower bound on $m$.
	
	\begin{theorem}
	\label{noisy_LB}
	Let $E$ be the maximum weight of a noise vector, and assume $E$ grows no faster than $O(\frac{D}{\Delta^2} \log n)$. For an $m \times n$ matrix $M$ capable of $\Delta$-approximation when the output is corrupted by such a noise vector, assuming the weight of the input vector of defectives is at most $D \leq \frac{n}{2}$, we must have
	\begin{align*}
	m &\geq \left(\frac{D}{\Delta^2} - 1\right) \log \frac{n}{D - \Delta^2} - \left(\frac{D}{\Delta^2} -1\right) \log e + E - \frac{E}{2} \log E +\Omega\left(\frac{E}{2} \log \left(\frac{D}{\Delta^2} \log n - E\right)\right) \\
	&= \Omega\left(\frac{D}{\Delta^2} \log n\right) + \Omega\left(E \log \left(\frac{D}{\Delta^2} \log n - E\right)\right).
	\end{align*}
	\end{theorem}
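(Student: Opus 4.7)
\emph{Proof plan.} The plan is to combine the noiseless combinatorial lower bound of \cref{pool_lb_D} with the sphere-packing upper bound developed immediately above the theorem statement. The two ingredients are (i) a lower bound $|\cI_{\max}| \geq \binom{n}{D/\Delta^2-1}/\binom{D-\Delta^2}{D/\Delta^2-1}$ on the size of a maximum independent set of the confusion graph $G$, obtained by adapting the argument of \cref{pool_lb_D}; and (ii) the sphere-packing inequality $|\cI_{\max}| \cdot \min_{\bfx} |B_{as}(\bfx,(E/2,E/2))| \leq 2^m$ (with $E := e_0+e_1$), which is already written down in the excerpt. Taking logarithms and rearranging gives $m \geq \log|\cI_{\max}| + \log \min |B_{as}|$, and substituting the two bounds delivers the two-term lower bound claimed.

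For ingredient (i), I would replay the counting from the proof of \cref{pool_lb_D} inside the noisy confusion graph $G$. The key is that \cref{prop_graph} grants cliques of $G$ exactly the two structural properties used in the noiseless proof: they are closed under unions (whenever the union has size at most $D$) and cannot mix sets whose sizes differ by more than a factor of $\Delta^2$. Consequently, for any $\ell \leq D/(2\Delta^2)$, any clique of $G$ containing an $\ell$-subset must have a unique maximal element of size at most $\ell\Delta^2$, and therefore covers at most $\binom{\ell\Delta^2}{\ell}$ many $\ell$-subsets of $[n]$. Setting $\ell = D/\Delta^2 - 1$ and executing the same Stirling manipulation as in \cref{pool_lb_D} yields the combinatorial term $(D/\Delta^2 - 1)\log(n/(D-\Delta^2)) - (D/\Delta^2-1)\log e$ of the theorem.

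For ingredient (ii), plugging in the volume estimate $\min_{\bfx} |B_{as}(\bfx,(E/2,E/2))| \geq E^{E/2}(m-E)^{E/2}/(E/2)^E$ already derived above the theorem and using the identity $E\log(E/2) = E\log E - E$, the noise contribution to the lower bound on $m$ is $(E/2)\log E + (E/2)\log(m-E) - E\log(E/2) = E - (E/2)\log E + (E/2)\log(m-E)$. Because ingredient (i) alone already forces $m = \Omega((D/\Delta^2)\log n)$ and the hypothesis of the theorem is $E = O((D/\Delta^2)\log n)$, the implicit $\log(m-E)$ factor collapses to $\Omega(\log((D/\Delta^2)\log n - E))$, which is exactly the stated asymptotic form.

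The main obstacle is ingredient (i): since $G$ is an edge supergraph of the noiseless confusion graph, its maximum independent set can only shrink when one passes from noiseless to noisy, so the noiseless independent set is not directly reusable. The resolution is to notice that the noiseless proof does not really use the equivalence-relation structure, only the poset-like closure and size restrictions on confusion classes; \cref{prop_graph} preserves both for $G$. Executing the argument carefully then produces representatives of distinct maximal cliques of $G$ that are pairwise $(E,E)$-far and therefore form a bona fide independent set of the required size in $G$.
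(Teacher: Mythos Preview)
Your plan mirrors the paper's proof exactly: both combine the noiseless covering lower bound of \cref{pool_lb_D} (treated as a lower bound on $|\cI_{\max}|$) with the sphere-packing upper bound on $|\cI_{\max}|$ derived just above the theorem, and then rearrange using $m \ge \Omega((D/\Delta^2)\log n)$ to replace $\log(m-E)$ by the stated asymptotic. Your ingredient (ii) and the subsequent algebra match the paper line for line. You are also right that the only delicate point is ingredient (i); the paper disposes of it in a single sentence (``since the graph $G$ is an edge super-graph of the confusion graph corresponding to the noiseless case, the lower bound on the number of partitions in \cref{pool_lb_D} can also be used to lower bound the size of an independent set in $G$'') and gives no further argument.

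Your proposed resolution of that step, however, does not go through as written. You assert that cliques of $G$ inherit union-closure from \cref{prop_graph}(i), but that lemma only guarantees that $A\cup B$ is adjacent to $A$ and to $B$, not that $A\cup B$ is adjacent to every other member of a clique containing $A$ and $B$. Concretely, if $A,B,C'$ are pairwise adjacent in $G$ then
\[
\supp{M\odot\bfv(A\cup B)}\setminus\supp{M\odot\bfv(C')}
=\bigl(\supp{M\odot\bfv(A)}\setminus\supp{M\odot\bfv(C')}\bigr)\cup\bigl(\supp{M\odot\bfv(B)}\setminus\supp{M\odot\bfv(C')}\bigr),
\]
which can have size up to $2(e_0+e_1)$; so $(M\odot\bfv(A\cup B),M\odot\bfv(C'))$ are only $(2E,E)$-close, not $(E,E)$-close, and $A\cup B$ need not lie in the clique. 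Consequently the ``unique maximal element'' step from \cref{pool_lb_D} does not transfer to cliques of $G$, and your final claim---that one can pick representatives of distinct maximal cliques that are pairwise $(E,E)$-far, hence an independent set of the required size---is not established by the argument you sketch. The overall architecture is the paper's, but this particular bridge still needs to be built.
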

	\begin{proof}
	Let $e_0$ be the number of 0s flipped to 1s, and $e_1$ the number of 1s flipped to 0s, so $E = e_0 + e_1$. From the above discussion, by plugging in the bound from \cref{pool_lb_D}, we have
	\begin{equation*}
	\left(\frac{D}{\Delta^2} - 1\right) \log \frac{n}{D - \Delta^2} - \left(\frac{D}{\Delta^2} -1\right) \log e \leq \log 2^m - \log \frac{E^{E/2} \cdot (m-E)^{E/2}}{(\frac{E}{2})^{E}},
	\end{equation*}
	thus
	\begin{eqnarray*}
	m &\geq& \left(\frac{D}{\Delta^2} - 1\right) \log \frac{n}{D - \Delta^2} - \left(\frac{D}{\Delta^2} -1\right) \log e + \log \frac{E^{E/2} \cdot (m-E)^{E/2}}{\left(\frac{E}{2}\right)^{E}}\\
	&\geq& \left(\frac{D}{\Delta^2} - 1\right) \log \frac{n}{D - \Delta^2} - \left(\frac{D}{\Delta^2} -1\right) \log e + \frac{E}{2} \log (E) + \frac{E}{2} \log (m-E) - E \log \left(\frac{E}{2}\right) \\
	&=& \left(\frac{D}{\Delta^2} - 1\right) \log \frac{n}{D - \Delta^2} - \left(\frac{D}{\Delta^2} -1\right) \log e + E - \frac{E}{2} \log E +\frac{E}{2} \log (m-E).
	\end{eqnarray*}
	As $m$ is $\Omega(\frac{D}{\Delta^2} \log n)$ by the noiseless lower bound, we have at least
	\begin{equation*}
	\frac{E}{2} \log (m-E) = \Omega\left(\frac{E}{2} \log \left(\frac{D}{\Delta^2} \log n - E\right)\right),
	\end{equation*}
	from which the result follows.
	\end{proof}


\section{Defective Approximation by Linear Operations} 
\label{sec:sensing_linear}

In this section we assume that the operation $\odot$ denotes standard matrix multiplication over a field $\F$, and as such will typically write $M \bfx$ instead of $M \odot \bfx$. We take $\F$ to be either finite or $\real$, the latter of which is the setting of the well-studied compressed sensing problem. Our aim now is to bound the size of a matrix $M$ that satisfies the criteria for $\Delta$-approximation in this model. Consider a vector space $V \subseteq \F^n$. Call such a vector space $(\Delta, D)$-distinguishing if it has the property that for parameters $\Delta > 1$, $D \leq n$, any two vectors $\bfx$ and $\bfy$ in the same coset of $V$ (so there exists $\bfv_1, \bfv_2 \in V$ such that $\bfx = \bfz + \bfv_1, \bfy = \bfz + \bfv_2$ for some $\bfz \in \F^n$) both having weight at most $D$ differ in weight by a factor of at most $\Delta^2$. In other words, if $\normO{\bfx} \leq \normO{\bfy}$, then $\normO{\bfy} \leq \Delta^2 \normO{\bfx}$. We can use this property to lower bound the rank (and thus the number of rows) of any $\Delta$-approximating matrix.

{\theorem \label{thm:equiv_rankM} For an $m \times n$ matrix $M$ which can $\Delta$-approximate the sparsity, $d$, of any vector with sparsity at most $D$ in the linear operations model, it is necessary that
\begin{equation}\label{equiv_rankM}
 	\rank M \geq n - \max_{(\Delta, D) \textrm{-distinguishing } V} \dim(V).
 \end{equation}}
\begin{proof}
	We show that the nullspace of any such matrix $M$ is a $(\Delta, D)$-distinguishing vector space. By our necessary criteria for $\Delta$-estimation, we have for any vectors $\bfx, \bfy$ with  $\normO{\bfx}\leq \normO{\bfy} \leq D$, that when $M \bfx = M \bfy$, then $\normO{\bfy} \leq \Delta^2 \normO{\bfx}.$
	Now, since $M (\bfx - \bfy) = 0$ if and only if $\bfx$ and $\bfy$ are in the same coset of $\ker{M}$, we have $\rank{M} \geq \min \{\rank N: \ker{N} \textrm{ is } (\Delta, D) \textrm{-distinguishing}\}$.
\end{proof}

We use the condition in \cref{thm:equiv_rankM} to bound the number of measurements needed, as $\rank{M}$ is a lower bound on the number of rows of $M$. Let $A_\F(n,d_{\min})$ denote the maximum dimension of a subspace in $\F^n$ that does not contain any nonzero vector in $\set{\bfx \in \F^n : \normO{\bfx} < d_{\min}}$. Note that for finite fields $\F_q$, the quantity $A_{\F_q}(n,d_{\min})$ denotes the maximum dimension of a linear $q$-ary code with minimum distance $d_{\min}$. A generalization of this quantity is studied in \cite{AbbeAB14}, which defines $A_\F(n,a,b)$ as the maximum dimension of a subspace in $\F^n$ that does not contain any vector in the annulus $\set{\bfx \in \F^n : a < \normO{\bfx} < b}$. Let $U$ be a $(\Delta, D)$-distinguishing vector space, and let $\bfx$ be a nonzero vector of maximum weight in $U$ subject to $\normO{\bfx}< D$. If no such vector exists, then
$\Dim{U} \leq A_\F(n,D).$
Otherwise, there exists a vector $\bfy \in \F^n$ with $\normO{\bfy}=1$ and $\normO{\bfy+\bfx} = \normO{\bfx}+1  \leq D$. Since $\bfy = \mathbf{0} + \bfy$ and $\bfx+\bfy$ are in the same coset of $U$, we have
\begin{align*}
	\normO{\bfx+\bfy} \leq \Delta^2 \underbrace{\normO{\bfy}}_{=1}
	\implies \normO{\bfx} \leq \floor{\Delta^2}-1.
\end{align*}
Thus,
$
	\dim{U} \leq A_\F(n,\floor{\Delta^2}-1,D).
$
As $A_\F(n,D)\leq A_\F(n,a,D)$ for any $a\leq D$, the above inequality applies to any $(\Delta, D)$-distinguishing space $U$. Therefore we have
\begin{equation*}
	\rank{M} \geq {n-A_\F(n,\floor{\Delta^2}-1,D)}
\end{equation*}
for any matrix $M$ that can $\Delta$-approximate $d$ for every defective vector of sparsity up to $D$ in the linear operations model.
Define $m^\star_\F(a,b,n) \defeq n - A_\F(n,a,b)$. We have from \cite[Proposition 1]{AbbeAB14} that whenever $2a - 2 \leq b$,
$
	m^\star_\F(a,b,n) = m^\star_\F(1,b,n-a+1).
$
The work in \cite{AbbeAB14} is oriented towards finite fields, but their proof is independent of the choice of field $\F$, so in particular applies also when $\F = \real$. Note that $m^\star_\F(1,b,n)$ is just the rank of the parity check matrix of the largest dimension linear code on $n$ coordinates with minimum distance $b$. Thus, as long as $D \geq 2 \floor{\Delta^2} - 4$,
\begin{equation*}
	\rank{M} \geq m^\star_\F(1,D,n-\floor{\Delta^2}+2).
\end{equation*}

If $\F$ is a finite field of size greater than or equal to $n$, or $\F = \real$, we know that
$
m^\star_\F(1,D,n-\floor{\Delta^2}+2) \geq D - 1
$
by the Singleton bound. Since we can exactly identify the set of defectives for $d \leq D$ in $2 D$ queries using basic techniques from compressed sensing, there is at most a factor 2 improvement possible.

For $\abs{\F}=q < n$, a stronger lower bound is possible since in general the Singleton bound is not tight. From the sphere-packing bound on $A_{\F}(n-\floor{\Delta^2}+2,1,D)$, we have that $m^\star_{\F_q}(1,D,n-\floor{\Delta^2}+2)$ is lower bounded by
\begin{align*}
	\left\lfloor\frac{D-1}{2}\right\rfloor \log_q(n - \floor{\Delta^2}+2) &-O(D \log D) \\
	&= \Omega\left(D \log \frac{n}{D}\right),
\end{align*}
as long as $\floor{\frac{D-1}{2}} \leq \frac{n}{2}$.

For an upper bound on the minimum possible rank of $M$ when $\abs{\F}=q < n$, recall that such a matrix must have the property that in every coset of the nullspace, any two vectors of weight less than or equal to $D$ must differ in weight by at most a $\Delta^2$ factor. As the difference of any two vectors in the same coset lies in the nullspace, this condition is satisfied if every nonzero vector in the nullspace has weight at least $2D + 1$, so we have the following result.
\begin{theorem}
\label{thm:linear_ub}
In the linear operations model, when $|\F| = q < n$, and $D \leq \frac{n(q-1) - q}{2}$, there exist matrices $M$ that can $\Delta$-approximate the true number of defectives, $d$, for every possible vector of defectives of weight at most $D$, with at most $n H_q(\frac{2D+1}{n})$ rows, where $H_q$ is the q-ary entropy function, $H_q(x) := x \log_q(q-1) - x \log_q(x) - (1-x) \log_q(1-x)$.
\end{theorem}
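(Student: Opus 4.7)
The plan is to reduce the construction of $M$ to the existence of a linear $q$-ary code of sufficiently large minimum distance, and then to invoke the Gilbert--Varshamov bound.

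First, I would take as given the observation made in the paragraph immediately preceding the theorem statement: whenever every nonzero vector in $\ker(M) \subseteq \F_q^n$ has Hamming weight at least $2D+1$, the matrix $M$ satisfies the sufficient condition of \cref{thm:equiv_rankM} for $\Delta$-approximation up to sparsity $D$, because any two vectors of weight at most $D$ in the same coset would otherwise differ by a nonzero kernel vector of weight at most $2D$. Hence it suffices to construct $M$ whose nullspace, viewed as a linear code $\cC \subseteq \F_q^n$, has minimum distance at least $2D+1$; one can then take $M$ to be any parity-check matrix of $\cC$, producing $n - \Dim{\cC}$ rows.

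Next, I would invoke the $q$-ary Gilbert--Varshamov bound, which guarantees the existence of a linear $[n,k]_q$ code of minimum distance at least $d$ whenever $k \leq n - \lceil \log_q V_q(n,d-1) \rceil$, where $V_q(n,r) = \sum_{i=0}^{r}\binom{n}{i}(q-1)^i$ is the Hamming-ball volume. Setting $d = 2D+1$ and bounding the volume by $V_q(n,d-1) \leq q^{n H_q(d/n)}$---the standard entropy estimate, whose applicability is secured by the hypothesis $D \leq (n(q-1)-q)/2$---yields a code of dimension at least $n(1 - H_q((2D+1)/n))$. Its parity-check matrix then has at most $n H_q((2D+1)/n)$ rows, which is the bound claimed.

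There is no substantive obstacle in this argument: the only nontrivial step is the routine bookkeeping check that the assumed range of $D$ places $(2D+1)/n$ in the regime where the entropy form of the volume bound is valid and yields a nontrivial (sub-$n$) upper bound on the redundancy. Beyond that, the proof is a direct concatenation of the preceding nullspace observation with a textbook coding-theoretic existence statement, and it does not touch the specific value of $\Delta$ at all---the approximation factor enters only via the weaker sufficient condition used by the preceding paragraph, whose slack between $\Delta$ and $2D+1$ does not improve the bound obtained here.
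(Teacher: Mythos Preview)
Your argument is correct and matches the paper's proof essentially line for line: reduce to finding a linear code over $\F_q$ of minimum distance at least $2D+1$, take $M$ to be its parity-check matrix, and bound the redundancy $n-\Dim\cC$ via the Gilbert--Varshamov bound, with the hypothesis on $D$ ensuring $(2D+1)/n \leq 1-1/q$ so that the entropy form applies. One small slip: the sufficient condition you invoke is that of \cref{estim_prob} (the general approximability criterion in \cref{matrix_equiv_estim}), not of \cref{thm:equiv_rankM}, which only gives a \emph{lower} bound on $\rank M$.
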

\begin{proof}
We have
\begin{align*}
	\min_{\substack{M \\ M \  \Delta-\textrm{approximates } d}} \rank{M}  &\leq \min_{\substack{M \\ \forall \bfx \neq 0 \in \ker{M}, \ \normO{\bfx}>2D}} \rank{M} \\
			  &= m^\star_{\F_q}(1,2D+1,n) 
			  \leq n H_q\left(\frac{2D+1}{n}\right),
\end{align*}
where the last inequality follows from the asymptotic Gilbert-Varshamov bound, using the assumption that $\frac{2D+1}{n} \leq 1 - \frac{1}{q}$.
\end{proof}

\begin{corollary}
\label{asymptotic_existence}
Under the conditions of the previous theorem with the additional assumption that $D = o(n)$, there exist matrices $M$ that can $\Delta$-approximate $d$ with $(2D+1) \log_q \br*{\frac{n}{2D+1}} + O(D) = O(D \log (n/D))$ rows for sufficiently large $n$.
\end{corollary}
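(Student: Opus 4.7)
The plan is to derive the Corollary directly from \cref{thm:linear_ub} by a simple asymptotic expansion of the $q$-ary entropy function $H_q$ when its argument tends to $0$. By \cref{thm:linear_ub}, under the stated hypotheses there exist matrices $M$ with at most $n H_q\!\left(\tfrac{2D+1}{n}\right)$ rows that $\Delta$-approximate $d$, so the entire task is to show that
\[
n H_q\!\left(\tfrac{2D+1}{n}\right) = (2D+1)\log_q\!\left(\tfrac{n}{2D+1}\right) + O(D)
\]
for $D = o(n)$ and $n$ sufficiently large.

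First I would write $x := \tfrac{2D+1}{n}$, which is $o(1)$ under the assumption $D = o(n)$, and substitute into the definition
\[
H_q(x) = x\log_q(q-1) - x\log_q(x) - (1-x)\log_q(1-x).
\]
The middle term $-x\log_q(x)$ is the dominant one: multiplying by $n$ gives $(2D+1)\log_q\!\left(\tfrac{n}{2D+1}\right)$, which matches the leading term in the claimed expression. The first term $x \log_q(q-1)$ contributes $n \cdot \tfrac{2D+1}{n}\log_q(q-1) = O(D)$ since $q$ is a fixed constant.

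For the last term, I would use the standard expansion $-\ln(1-x) = x + O(x^2)$ for small $x$, which upon converting to $\log_q$ gives $-(1-x)\log_q(1-x) = \tfrac{x}{\ln q} + O(x^2)$. Multiplying by $n$ yields $\tfrac{2D+1}{\ln q} + O\!\left(\tfrac{(2D+1)^2}{n}\right)$, which is $O(D)$ since $(2D+1)^2/n = o(D)$ when $D = o(n)$. Summing the three contributions yields the stated $(2D+1)\log_q\!\left(\tfrac{n}{2D+1}\right) + O(D)$ bound. Finally, since $\log_q = \log/\log q$ with $q$ constant, the leading term is $\Theta(D \log(n/D))$, dominating the $O(D)$ additive term, so the total is $O(D\log(n/D))$ as claimed.

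There is no real obstacle here; the only mild care required is checking that each nondominant contribution is genuinely $O(D)$ (and not hiding a $\log$ factor), which follows because $q$ is fixed and $x = o(1)$. The hypothesis $D \leq (n(q-1)-q)/2$ from \cref{thm:linear_ub} is automatically satisfied for large $n$ under $D = o(n)$, so the invocation of that theorem is valid. In short, the proof is a one-line appeal to \cref{thm:linear_ub} followed by a routine Taylor expansion of $H_q$ around $0$.
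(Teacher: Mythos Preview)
Your proposal is correct and follows essentially the same approach as the paper: invoke \cref{thm:linear_ub} and expand $n H_q(x)$ term by term with $x=(2D+1)/n\to 0$. In fact your treatment of the third term is slightly more careful than the paper's---you correctly show $-(n-2D-1)\log_q(1-x)$ contributes $\Theta(D)$ (namely $(2D+1)/\ln q + o(D)$), whereas the paper's text asserts this term ``goes to $0$,'' which is imprecise; either way the term is absorbed into the $O(D)$ remainder and the conclusion is unchanged.
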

\begin{proof}
By the previous theorem, such matrices exist with $nH_q(\frac{2D+1}{n})$ rows. Expanding, we have
\begin{align*}
n H_q\left(\frac{2D+1}{n}\right) &= n\left((2D+1) \log_q(q-1) - \frac{2D+1}{n} \log_q \left(\frac{2D+1}{n}\right) - \left(1-\frac{2D+1}{n}\right) \log_q \left(1- \frac{2D+1}{n}\right)\right) \\
&= (2D+1) \log_q(q-1) - (2D+1) \log_q \left(\frac{2D+1}{n}\right) - (n - 2D - 1) \log_q \left(1-\frac{2D+1}{n}\right) \\
&= O(D) + (2D+1) \log_q \left(\frac{n}{2D+1}\right) - (n - 2D - 1) \log_q\left(1- \frac{2D+1}{n}\right),
\end{align*}
and then note that the last term goes to 0 for large $n$, as $\frac{2D+1}{n}$ goes to 0 since $D = o(n)$.
\end{proof}

The above result is asymptotically tight with the lower bound given previously, so only improvements in the constant factor are possible. For certain settings of parameters these improvements follow easily from known results; for instance, the existence of binary BCH codes with $n-k \leq D \log_2(n+1)$ for certain values of $n$ implies that when $q=2$ the bound in \cref{asymptotic_existence} improves by about a factor of 2.

\vspace{2mm} 

\noindent{\textbf{Remark: Adaptive Tests.}} In the case of linear measurements over $\real$, a simple trick yields a very efficient adaptive scheme as well. First, suppose the entries of the vector $\bfx$  are nonnegative. In this case, the result of a test is nonzero if and only if some nonzero entry of $\bfx$ is included in the test.
Then for the purpose of determining sparsity, each test tells us as least as much information as the corresponding test in the group testing model, as we can simply threshold the real-valued test result to 1 if it is nonzero and 0 otherwise. Thus we can exploit existing results for adaptive group testing sparsity approximation \cite{falahatgar2016estimating} to obtain an estimate of the sparsity that is accurate with high probability in as few as $O(\log \log d)$ measurements.
However, if not all entries of the vector $\bfx$ are nonnegative, then there is the additional complication that it is possible to observe a test result of 0 even when a nonzero entry of $\bfx$ is included in the test. This will happen exactly when the vector corresponding to some test is orthogonal to $\bfx$. It is easy to counteract this by slightly perturbing each test vector, adding a small real-valued random vector that is nonzero only on the support of the test vector to each test. This ensures that the new test vector lies in the space orthogonal to $\bfx$ with probability 0.

\bibliographystyle{abbrv}
\bibliography{sparse_estimation}

\end{document}